\begin{document}
\title{Locating Robber with Cop Strategy Graph: Subdivision vs. Multiple Cop}
%
%
\author{Shiqi Pan}
\authorrunning{S. Pan}
%

\institute{\email{panshiqi.psq@gmail.com}}
\maketitle              
\begin{abstract}
We consider the Robber Locating Game, where an invisible moving robber tries to evade the pursuit of one or more "helicopter" cops, who send distance probes from anywhere on the graph. In this paper, we attempt to propose two useful constructions for general problems in this game: a \textit{state} variable that describes the available game information for the cops, and a \textit{Cop Strategy Graph} construction that presents all possibilities of the game given a deterministic cop strategy. Then we will use them, along with algorithms and pseudo-code, to explain the relationship between two graph parameters, the localization number $\zeta$ and the subdivision number $\eta$. Researchers have shown that $\eta=O(\zeta)$ and $\zeta \neq O(\eta)$. We will revisit their proofs, consolidate the essential correspondence between the two numbers via our proposed constructions, and show an explicit result for $\zeta$ in terms of $\eta$, the capture time, and the graph diameter.
\keywords{Cops and Robbers \and Robber Locating Game \and Localization Number \and Subdivision Number.}
\end{abstract}
\section{Introduction}
Cops and Robbers is a widely studied graph pursuit game. Its basic setup includes a finite countable connected graph and two sides of players, robbers and cops. It has many variations. And in this paper, we focus on one game variation called the \textit{Robber Locating Game}. 

In the Robber Locating game, there are one or more "helicopter" cops and a single invisible robber, i.e., the location of the robber is inaccessible to the cops. The game starts with the robber choosing its initial position. Then in each round, the two sides take alternative turns to play: the cops first simultaneously send distance probes from \textit{anywhere} on the graph, each receiving the distance between the probed vertex and the robber's location; then the robber moves to the neighbor vertices or stays unmoved. 
The cops win if they manage to locate the robbers by identifying their position. We are particularly interested in the cops' strategy. A strategy is called \textit{cop-winning} if it secures a win for the cops in finite time. And the graph is called \textit{localizable} if there exists a \textit{cop-winning} strategy, and not \textit{localizable} if not.

Many studies have been done on the \textit{localization number}. Given a graph $G$, the localization number of $G$, denoted $\zeta$, is the minimum number of cops for $G$ to be localizable. Seager studied graphs with one cop~\cite{ref_article1}~\cite{ref_article2} when she first introduced the game, and her research focuses on special graphs such as cycles, complete graphs, and subdivisions of graphs. More studies followed on the topic of subdivision later on. A subdivision of graph $G$ with an integer $m$, denoted $G^{1/m}$, is when each edge of the $G$ is replaced by a path of length $m$. Games on subdivisions  usually assume one cop, as the added paths already put a disadvantage on the robber by slowing it down. We call the minimum value of $m$ such that $G^{1/m}$ is localizable the \textit{subdivision number} of $G$, denoted $\eta$. In Haslegrave, Johnson and Koch's papers~\cite{ref_article4}~\cite{ref_article5}, the bounds on $\eta$ are proved as $n/2$ for $n$-vertex $G$. In their later paper~\cite{ref_article6}, they investigate the relationship between localization numbers $\zeta$ and $\eta$.

Many other graph parameters on various game variations have also been studied.~\cite{ref_article3}~\cite{ref_article7}~\cite{ref_article8} explore a variation with perfect information to all players, and~\cite{ref_article9}~\cite{ref_article10} further incorporate complexity analysis in the research of such variation. Parameter-wise, for example,~\cite{ref_article11} introduced the \textit{capture time}, denoted $capt$, in the Robber Locating Game. It is defined as the minimum number of rounds for the cop to guarantee to win. ~\cite{ref_article12} also studies a similar concept called \textit{escape length} in the game variation \textit{Rabbit and Hunter} where the rabbit can "jump" to any vertex during its round. 

Previous papers have used many tools to prove the localizability of graphs, such as verbal explanations, graphs, and tables. But so far there haven't been many commonly used variables and terms shared across the research. In this paper, we intend to formalize some common terms that many studies have already implicitly relied on, and provide a general and explicit basis for problems in general graph pursuit games. Specifically, we will purpose a \textit{state} variable that describes the game process and a construction called the \textit{Cop strategy Graph}, which is a graph based on the state variable that presents all possibilities of a game given a deterministic cop strategy. In Section 2, we will present formal definitions and the process of building the cop strategy graph.

The cop strategy graph is helpful with illustrating the game process under a certain cop strategy, as well as relating strategies and games. In Section 3, we will use them to explore the relationship between the localization number $\zeta$ and the subdivision number $\eta$. The former will be studied on a Robber Locating Game with one robber and multiple cops; we call it the \textit{multiple-cop game}, denoted $Game_{cop}$. And the latter will be defined on the game with one robber and one cop on a subdivision graph; it is called the \textit{subdivision game} and denoted as $Game_{subs}$. Haslegrave, Johnson, and Koch's showed that $\eta = O(\zeta)$~\cite{ref_article6}. However, for the important observation of the correlation between the two games, their paper didn't provide detailed and explicit explanations. Thus, we will revisit their claim in section~\ref{section:subs} and in particular, establish an explicit correspondence between the games with \textit{state}. Their paper also showed that $\zeta \neq O(\eta)$. In section~\ref{section:cop}, we use the Cop Strategy Graph to again establish a correspondence with the two games. For an explicit relationship, with $capt$ being the minimum number of rounds for cops to secure a win and $\delta$ being the graph diameter, we will show that $\zeta=O(2^{capt/\eta}16^\eta \delta^{2\eta})$.

\section{State and Cop Strategy Graph}
First, we define several variables. For any game, the \textit{robber set} $R_i$ is defined as the set of possible positions of the robber \textit{after} the cops probe and \textit{before} the robber moves in round $i$ and the \textit{extended robber set} $X$ is that \textit{after} the robber moves. $X_0=V(G)$, and $X_{i}=CloseNeighbor(R_i)$ for any $i$, where the \textit{close neighbor} includes the vertices themselves and their adjacent vertices. Cop wins if $|R_i| = 1$.


In this paper, we assume that the cop strategies are \textit{deterministic} in terms of all accessible information in the game. And \textit{state}, denoted $\phi$, is defined as the set of information that may \textit{affect} the cop's probes. It may include the (extended) robber set, round numbers, previous probing results, and so on. The cops take different probes \textit{if and only if} the states are different. Then the cop strategy $A$ is defined as a deterministic function that takes in $\phi$ and outputs the probes for cops to make. Probes can be written as $A(\phi)$.

Now, we construct the \textit{Cop Strategy Graph}. Given a strategy $A$, we denote its cop strategy graph as $H(A)$. Vertex set $V(H)$ contains states as their values. And the graph starts with a node with the initial state. An edge exits from state $\phi_1$ on one level to $\phi_2$ on the next level \textit{if and only if} there exists a valid set of probing results from probes $A(\phi_1)$ that updates the state to $\phi_2$, and the edge's value is the probing results. State $\phi$ is a leaf \textit{if and only if} it is a \textit{terminating state}, i.e., the robber set has of size 1 and the cops win.



See Figure~\ref{figure:copStrategyGraph} for an example. Given a graph $G$, consider the game with one single cop (Graph (2)) and a cop strategy. The cop first probes $C$ by the strategy, locating robbers at $C$, $A$, or $B$ if the probing result is 0, 1, or 2 respectively; if the result is 3, then the robber sets $R=\{D,E\}$, $X=CloseNeighbor(R)\{B,D,E\}$. Then the cop probes $D$, where it is able to locate the robber with any probing results. Similarly. a strategy for the game with two cops is presented in Graph (3).
\begin{figure}
\includegraphics[width=\textwidth]{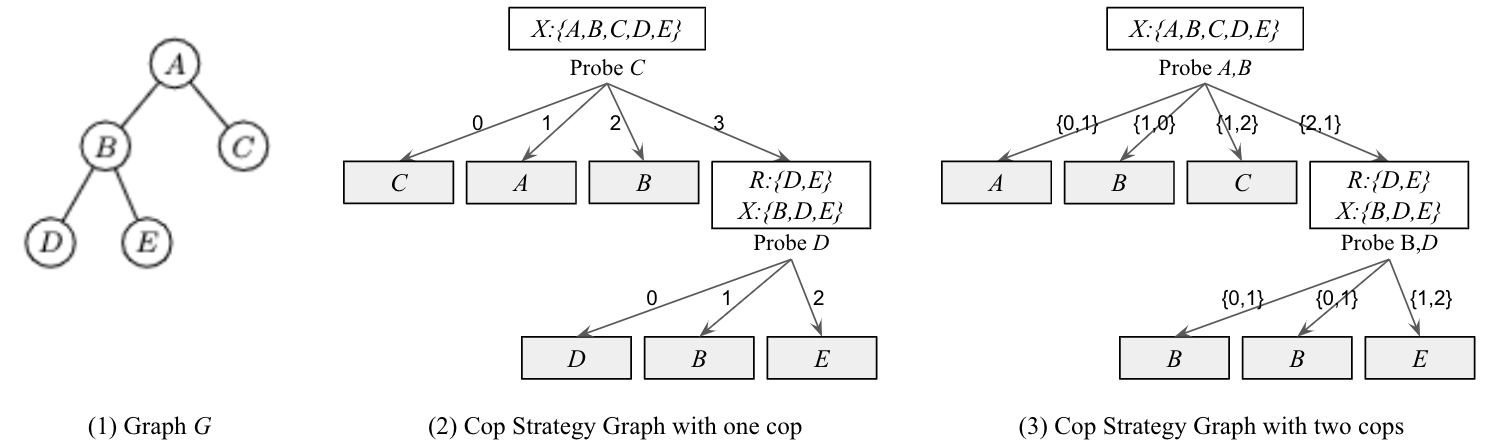}
\centering
\caption{Cop Strategy Graphs}
\label{figure:copStrategyGraph}
\end{figure}

\begin{theorem}
The cop strategy $A$ is cop-winning if and only if its corresponding graph $H_A$ is finite.
\end{theorem}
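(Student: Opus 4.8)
The plan is to prove both implications through one structural observation: the root-to-leaf paths of $H_A$ are exactly the possible plays of the game under $A$, so an infinite root-path amounts to a robber trajectory that evades forever. Two features of $H_A$ drive the argument. First, it is \emph{layered}: every edge goes from a round-$i$ state to a round-$(i+1)$ state, so $H_A$ is acyclic and every vertex of a given level is reached only from the previous level. Second, it is \emph{finitely branching}: from a state $\phi$ the cops make the fixed finite probe set $A(\phi)$, each probe returns one of finitely many values since $G$ is finite, so there are only finitely many possible probing-result vectors, and each vector deterministically yields at most one successor state. Since every vertex of $H_A$ is reachable from the initial state, it follows by induction that each level of $H_A$ is finite.

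For the ``if'' direction, assume $H_A$ is finite; being finite and acyclic it has a maximum level $T$. A non-leaf state is non-terminating, hence has $|R_i|\ge 2$, so after the robber moves some vertex of $X_i$ is still consistent with \emph{some} result vector for $A(\phi)$; thus every non-terminating state has out-degree at least one, and every maximal root-path ends at a terminating leaf within $T$ edges. Now any actual play of the robber traces a root-path of $H_A$ — in round $i$ the cops sit in some state $\phi_i$, probe $A(\phi_i)$, and the results move them along an edge — so the cops reach $|R_i|=1$ within $T$ rounds no matter how the robber plays. Hence $A$ is cop-winning.

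For the ``only if'' direction I argue the contrapositive: if $H_A$ is infinite I exhibit a robber trajectory that $A$ never locates. Each level of $H_A$ is finite and the graph is infinite, so infinitely many levels are nonempty, hence (a nonempty level forcing its predecessor to be nonempty) all are; applying K\"onig's lemma to the tree of finite root-paths produces an infinite root-path $\phi_0,\phi_1,\dots$ whose edges carry valid probing-result vectors $\rho_1,\rho_2,\dots$. The crux — and the step I expect to be most delicate — is that validity of $\rho_i$ is only the \emph{set} statement $R_i\neq\emptyset$, whereas defeating $A$ needs one coherent robber walk. To bridge this, I first show, by induction on $i$ using $R_i\subseteq CloseNeighbor(R_{i-1})=X_{i-1}$, that every vertex of $R_i$ is the endpoint of a robber trajectory through rounds $1,\dots,i$ consistent with $\rho_1,\dots,\rho_i$; so such trajectories exist for every $i$. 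These consistent partial trajectories form a tree under the prefix order, finitely branching because $G$ is finite and infinite because arbitrarily long ones occur, so a second application of K\"onig's lemma yields an infinite walk $r_0,r_1,\dots$ consistent with all the $\rho_i$. If the robber follows this walk, an easy induction shows the cops stay on $(\phi_i)$ — in round $i$ they probe $A(\phi_{i-1})$ and observe exactly $\rho_i$ — so $|R_i|\ge 2$ for every $i$ and the cops never win, i.e., $A$ is not cop-winning. The only genuine bookkeeping is this double K\"onig argument together with the passage from ``$R_i\neq\emptyset$'' to an honest witnessing robber position at each round and the chaining of those witnesses into a walk via the $CloseNeighbor$ recursion; once that is pinned down, both directions fall out.
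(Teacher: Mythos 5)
Your proof is correct and rests on exactly the equivalence the paper uses ($A$ is cop-winning $\Leftrightarrow$ every branch of $H_A$ terminates at a leaf $\Leftrightarrow$ $H_A$ is finite), but where the paper compresses this chain into a single sentence, you supply the two pieces of substance it leaves implicit: the layered/finitely-branching structure plus K\"onig's lemma to show an infinite $H_A$ contains an infinite root-path, and the second K\"onig argument that upgrades round-by-round nonemptiness of the robber sets ($R_i\neq\emptyset$ for all $i$) to a single coherent evading walk. Both steps are handled correctly, so your write-up is a more rigorous rendering of the paper's argument rather than a genuinely different one.
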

\begin{proof}
$A$ being \textit{cop-winning} is equivalent to that for any robber's route, the cop can always reduce possible robber locations to one in finite time with $A$, which is equivalent to each branch of $H_A$ terminating on a leaf, i.e., $H_A$ is finite.
\end{proof}

\section{Multiple Cops vs. Subdivision}
Now, we use the state variable and the cop strategy graph to explore the relationship between the localization number and subdivision number.

Set a graph $G$. We study the localization game $Game_{cop}$ and the subdivision game $Game_{subs}$. Note that for ease of reading, the footnotes $_{cop}$(or $_c$) and $_{subs}$(or $_s$) will also be used in other parameters to flag the two games respectively. Previous papers have shown important conclusions regarding the relationship. ~\cite{ref_article6} proves that $\eta = O(\zeta)$ and that a factor of 2 is the best possible. It focuses mainly on the specific actions of the cop in $Game_{subs}$ and puts less emphasis on the relationship between the two games. However, a closer look at the correspondence, including the transformation of probing vertices and results, is necessary. Thus, we will revisit their cop strategy and use the state variable to present a detailed analysis of the relationship and the deduction method in this section, which is the foundation of the validity of the strategy. ~\cite{ref_article6} also proves that $\zeta \neq O(\eta)$, and in particular, $\forall m \geq 3$, there exists $G$ such that $\eta=m, \zeta=2^{m-1}$. In Section~\ref{section:subs}, we use the Cop Strategy Graph to relate the two games and present an explicit result that $\zeta=O(2^{capt/\eta}16^\eta \delta^{2\eta})$.

\subsection{$\eta = O(\zeta)$)}\label{section:subs}
Assume there exists a cop-winning strategy $A_{cop}$ for the multiple-cop game with $\zeta$ cops. We will show that for any subdivision number $m=O(\zeta)$, $Game_{subs}$ is localizable. Our key is to ``mock" the robber in $Game_{cop}$ that makes the \textit{"same"} moves as the ``real" one in $Game_{subs}$ and to get probing "hints" from $A_{cop}$.

Before going into the algorithm, we introduce several concepts for subdivisions of graphs. For details, readers are directed to~\cite{ref_article4}. The paths that are added to $G^{1/m}$ are called \textit{threads}. Their endpoints, i.e., vertices in $G^{1/m}$ that correspond to vertices in the original graph $G$, are called \textit{branch vertices}. We denote the relationship of branch vertex $b$ in $G^{1/m}$ and its corresponding vertex $v$ in $G$ as $b = v^{1/m}$ and $v = b^{m}$, and we say $b$ and $v$ are ``\textit{subdivisionally} equal" or ``\textit{subdivisionally} equivalent". The same notations are also applied to vertex sets in the following sections. When $m$ is odd, the \textit{midpoint} is the one vertex in the middle of every thread; and when $m$ is even, \textit{near-midpoints} are the two in the middle. By probing any branch vertex on the graph and taking the module of the result distance, the cop can tell whether the robber is at a branch vertex, (near)-midpoint, or neither.

The robber remains in one thread until visiting a branch vertex. And it is always closest to one or two vertices, two when the robber is at the midpoint of a thread with an odd $m$. Its closest vertex remains the same between two consecutive visits to midpoints. We say that the robber is within the vicinity of a branch vertex $b$ if $b$ is its nearest branch vertex. We thus observe a natural link between the robber being within the \textit{vicinity} of a branch vertex $b$ in $G^{1/m}$ and it being at $v=b^m$ in $G$. And we will base our strategy on this link.

\subsubsection{Cop Strategy}
We propose a cop strategy $A_{subs}$ in $Game_{subs}$. It only probes branch vertices. And the game is considered as three separate stages in regard to the robber's movement. Stage 1 is before the robber visits any branch vertex; Stage 2 is when the robber moves between branch vertices along threads; and Stage 3 is after the robber last visits a branch vertices until it is caught.

In Stage 1, we probe at random. The robber is guaranteed to remain in one thread, and thus it is located once both endpoints of its current thread have been probed. 

For Stage 2, similar to~\cite{ref_article6}, we make sure that the cop probes the corresponding vertices of probing vertices provided by $A_{cop}$ while it is in the vicinity of a branch vertex, probing to "identify" that branch vertex. Along the robber's route, it may visit multiple branch vertices and midpoints. We define a \textit{stride} as the rounds between the robber's two consecutive visits to branch vertices while passing midpoints. Namely, stride $i$ includes the rounds \textit{exclusively after} the robber moves to a branch vertex, denoted $b_{i-1}$, and \textit{inclusively before} it visits the consecutive one, denoted $b_i$. We also call the rounds in $Game_{cop}$ as strides too as there is a correspondence between the two in this stage.

With $\phi_{c}$ as the initial state, the cops probe in $Game_{subs}$ based on strategy $A_{probs}$ in each stride $i$ as in Algorithm~\ref{alg:strategy1}:
\begin{algorithm}
\caption{Cop Strategy in Subdivision Game}\label{alg:strategy1}
\KwIn{Cop strategy $A_{cop}$, graph $G$}
\KwOut{\textit{True} if cop wins}
\SetKwFunction{FStrategicProbe}{StrategicProbe}
\SetKwFunction{FDeduce}{Deduce}
\SetKwProg{Pn}{Function}{:}{\KwRet}

\Pn{\FStrategicProbe{$P_{cop}$}}{
    $ProbeResults$ = $\{\}$
    \For{$p_{cop} \in P_{cop}$} {
        Probe $p_{cop}^{1/m}$\;
        \If {Robber is at a midpoint} {
            \KwRet $StrategicProbe(P_{cop})$\;
        }
        Add the result to $ProbeResults$\;
    }
    \KwRet $ProbeResults$\;
}

\Pn{\FDeduce{$Result_{subs}$}}{
    $Result_{cop}=\{\}$\;
    \For{$d \in Result_{subs}$}{
        $Result_{cop}$.add($d/m$)\;
    }
    \KwRet $Result_{cop}$\;
}
\;

Initialize $\phi_c=State\{ExtendedRSet: V(G)\}$, $i=1$\;

\While{True}{
    Get $P_{cop} = A_{cop}(\phi_{c})$, the multiple-cop probes at round $i$\;
    \While {robber visits non-midpoint vertices} {
        Probe randomly and \KwRet true if the robber is located\;
    }
    Result = $StrategicProbe(P_{cop})$\;
    \KwRet true if the robber is located\;
    \While {True} {
        Probe randomly and \KwRet true if the robber is located\;
    }
    Get $Deduce(Result)$, input to $Game_{subs}$, and update $\phi_{c}$\;
}
\end{algorithm}

\subsubsection{Deduction of Probing Results}
Consider only the \textit{strategic probes}, the ones based on multiple-cop strategy $A_{cop}$. We now show that the $Deduce$ function in the algorithm, with probing results in $Game_{cops}$ as input and those for $Game_{subs}$ as output, maintains a correspondence of the robber sets in the two games.

\begin{lemma}\label{lemma:deduce1}
$Deduce$ maintains the \textit{subdivisional equality} of the robber sets $R_c$ in $Game_{cops}$ and $R_s$ in $Game_{subs}$ in each stride.
\end{lemma}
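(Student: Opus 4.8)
The plan is to prove Lemma~\ref{lemma:deduce1} by induction on the strides, showing that at the start and end of each stride the robber sets in the two games are subdivisionally equal, i.e. $R_s = R_c^{1/m}$. The base case is the transition from Stage 1 into Stage 2: when the robber first visits a branch vertex $b_0$, the Stage 1 random probing of both endpoints of its thread has already located it, so in fact the interesting base case is the entry into the first \emph{strategic} stride, where $R_c = X_0 = V(G)$ and $R_s$ is (subdivisionally) the set of branch vertices, consistent with $R_s = (V(G))^{1/m}$ once we restrict attention to the robber being within the vicinity of a branch vertex. So I would set up the invariant carefully at the moment the robber sits on $b_{i-1}$ at the start of stride $i$.

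For the inductive step, I would track what happens during one stride in both games in parallel. In $Game_{subs}$, the \texttt{StrategicProbe} routine probes each $p_{cop}^{1/m}$ in turn; the crucial facts to verify are: (i) because the cop only probes branch vertices and the robber is currently within the vicinity of a branch vertex $b$, the distance returned is $m \cdot d_G(b^m, p_{cop}) + (\text{offset})$, where the offset is the robber's displacement along its thread toward or away from its nearest branch vertex; (ii) the \texttt{if Robber is at a midpoint} check detects exactly the situation where this offset would be ambiguous (the robber is equidistant from two branch vertices, or at a near-midpoint for even $m$), and in that case the routine restarts, so every \emph{completed} strategic probe is taken while the robber is unambiguously in the vicinity of a single branch vertex; (iii) consequently $d_{G^{1/m}}(\text{robber}, p_{cop}^{1/m}) \equiv 0 \pmod m$ precisely when the robber is at a branch vertex, and in general the quotient $\lfloor d/m\rfloor$ (or $d/m$ when $m \mid d$) recovers $d_G(b^m, p_{cop})$. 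Then \texttt{Deduce} maps the result vector $(d_j)$ to $(d_j/m)$, which is exactly the vector of distances the mocked robber at $b^m$ would have induced in $Game_{cop}$ from the probes $P_{cop}$. Feeding this into $Game_{cop}$ updates $R_c$ to the set of vertices $v \in X_{i-1}$ consistent with those distances; in $Game_{subs}$ the same probing results, by the module argument, narrow $R_s$ down to the branch vertices whose vicinity is consistent — i.e. to $R_c^{1/m}$ — together with the thread positions reachable within the stride, all of which are subdivisionally equal to $R_c$.

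The last piece is to check that the \emph{extension} step (robber moves) preserves the invariant across the stride boundary: after the strategic probes pin down $R_c$, the mocked robber in $Game_{cop}$ moves to a close neighbor, giving $X_{i} = CloseNeighbor(R_c)$; correspondingly, the real robber in $Game_{subs}$ walks along a thread and arrives at the next branch vertex $b_i$, whose original vertex $b_i^m$ is a neighbor (or equal) of $b_{i-1}^m = v \in R_c$ — this is exactly the "natural link" the text highlights, and it is precisely why a stride in $Game_{subs}$ corresponds to a single round in $Game_{cop}$. Hence $b_i^m \in X_i$, and ranging over all consistent robber routes gives $R_s$ at the start of stride $i+1$ equal to $X_i^{1/m}$ restricted to branch vertices, closing the induction. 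The main obstacle I expect is item (ii): making rigorous that the midpoint/near-midpoint check is both necessary and sufficient to guarantee the offset is recoverable, especially handling the even-$m$ case with two near-midpoints and the boundary case where $p_{cop}^{1/m}$ lies on the \emph{same} thread as the robber (so the distance need not be a clean multiple of $m$ plus a small offset). I would isolate this as a sub-claim about distances in $G^{1/m}$ between branch vertices and thread-interior vertices, prove it by a short case analysis on the relative position of the robber and the probed branch vertex, and then the rest of the lemma follows mechanically.
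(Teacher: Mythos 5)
Your overall architecture matches the paper's: a per-probe claim that dividing the subdivision distance by $m$ recovers the $G$-distance from the probe to the robber's nearest branch vertex, wrapped in an induction on strides (the paper dispatches the induction in one sentence and spends the proof on the distance claim). However, the recovery formula you commit to is wrong, and the error sits exactly at the sub-claim you flag as the main obstacle. You assert that $\lfloor d/m\rfloor$ recovers $d_G(b^m,p_{cop})$ where $b$ is the robber's nearest branch vertex. This fails whenever the shortest path from the probe to the robber enters the thread through the \emph{far} endpoint: writing $v_1,v_2$ for the near and far endpoints, $x=dist(v_1,v)<m/2$, $d_1=dist_s(p,v_1)$, $d_2=dist_s(p,v_2)$, the returned distance is $\min(d_1+x,\ d_2+m-x)$, and in the second case one gets $\lfloor d/m\rfloor = d_1/m - 1$, not $d_1/m$. (Concretely: $m=5$, $d_1=15$, $d_2=10$, $x=1$ gives $d=14$ and $\lfloor 14/5\rfloor=2\neq 3$.) The correct operation is rounding to the \emph{nearest} integer, and the non-obvious ingredient that makes this work — which your proposal does not identify — is the triangle inequality between $p$, $v_1$, $v_2$: since $d_1,d_2$ are multiples of $m$ and $|d_1-d_2|\le m$, the far-endpoint case forces $d_1=d_2+m$ exactly, so the residue $m-x>m/2$ signals that you must round up by one. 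Without that step the "offset" in your item (i) is not recoverable and the deduced distances fed to $Game_{cop}$ would be off by one.

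Two smaller points. Your item (ii) is right in spirit — the midpoint re-probe is what guarantees a strict nearest branch vertex, i.e.\ $0<x<m/2$, which the case analysis above needs. But your worry about the probe lying on the same thread as the robber is vacuous: probes are always branch vertices ($p_{cop}^{1/m}$ for $p_{cop}\in V(G)$), so that situation is just $d_1=0$ or $d_2=0$ and is covered by the general argument. Your base-case discussion also conflates Stage~1 termination (both thread endpoints probed) with entry into the first strategic stride; the invariant should simply be initialized with $R_s$ the full vertex set and refined from the first strategic probe onward.
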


\begin{proof}
Let $q$ be a probe in $Game_{cops}$ by $A_{cops}$ in an arbitrary stride $i$. Its counterpart $q=p^{1/m}$ must be probed in stride $i$ in $Game_{subs}$ too. Let $v$ be a robber location that complies with the probing result of $p$ in $Game_{subs}$. Let $v_1, v_2$ be the endpoints of the thread $v$ is on, where $v_1$ is strictly closer to $v$. Let the corresponding vertices in $G$ be $w_1=v_1^{m}, w_2=v_2^m$. We prove below that given $dist_s(p,v)$, $Deduce()$ outputs $dist_s(q,w_1)$, and thus, $w$ complies with the probing result of $q$.

Denote $d_1=dist_s(p,v_1), d_2=dist_s(p,v_2)$ and $x=dist_s(v_1,v), m-x=dist_s(v_2,v)$. Since $v$ is close to $v_1$,  $0<x<\lfloor m \rfloor / 2$.

See Figure~\ref{figure:deduce1} for an illustration.
\begin{figure}
\begin{tikzpicture}[scale=1.5, auto,swap]
\tikzstyle{vertex}=[circle,draw,minimum size=15pt,inner sep=0pt]
\tikzstyle{branch vertex} = [vertex,minimum size=12pt,fill=black!10]
\tikzstyle{edge} = [draw,thick]
\tikzstyle{right edge} = [draw,thick,bend right]
\tikzstyle{length edge} = [draw,thick,]
\tikzstyle{thread edge} = [draw,thick,dashed]
\tikzstyle{weight} = [font=\small]
    \node[branch vertex] (v) at (0.5,0) {$v$};
    \node[vertex] (v1) at (0,0) {$v_1$};
    \node[vertex] (v2) at (1.5,0) {$v_2$};
    \node[vertex] (p) at (1,1) {$p$};
    \path[thread edge] (v1) -- node[weight] {$x$} (v);
    \path[thread edge] (v) -- node[weight] {$m-x$} (v2);
    \path(v2) edge [bend right] node {$d_2$} (p);
    \path(p) edge [bend right] node {$d_1$} (v1);

    \node[vertex] (w1) at (2.5,0) {$w_1$};
    \node[vertex] (w2) at (4,0) {$w_2$};
    \node[vertex] (q) at (3.5,1) {$q$};
    \path(w2) edge [bend right] node {} (q);
    \path(q) edge [bend right] node {} (w1);
    \path(w2) edge [] node {} (w1);

    \node[text width=3cm] at (1.5, -0.45) {$G^{1/m}$};
    \node[text width=3cm] at (4, -0.45) {$G$};
\end{tikzpicture}
\centering
\caption{Probing Result Deduction}
\label{figure:deduce1}
\end{figure}
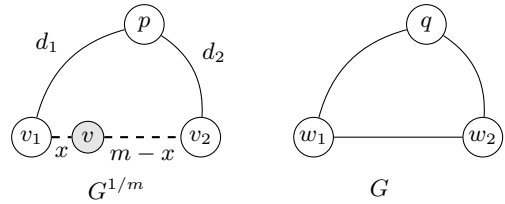

Paths between $p$ and $v$ must go through either one $v_1$ or $v_2$. Thus,
$$
dist_s(p,v)=\begin{cases}
    d_1+x \equiv x \mod m, & \text{if $path(p,v)$ contains $v_1$}\\
    d_2+m-x \equiv m-x\mod m, & \text{if $path(p,v)$ contains $v_2$}
\end{cases}
$$

\textbf{Case 1: $dist_s(p,v) < m/2 \mod{m}$}. This means that $path_s(p,v)$ contains $v_1$, and thus $\lfloor dist_s(p,v)/m \rfloor = \lfloor \frac{d_1+x}{m} \rfloor = \frac{d_1}{m} = dist_p(q,w_1)$.

\textbf{Case 2: $dist_s(p,v) > m/2 \mod{m}$}. This means that $path_s(p,v)$ contains $v_2$, so $d_2+m-x < d_1+x$. Since $d_1,d_2$ are multiples of $m$, and $0<x<\lfloor m \rfloor / 2$, $d_1 \geq d_2 + m$. By Triangle Inequality, the triangle with points $p$, $v_1$, and $v_2$, $d_1 \leq d_2 + m$. Therefore, $d_1 = d_2 + m$. $\lfloor dist_s(p,v)/m\rfloor+1 = \lfloor \frac{(d_2+m-x)+ m}{m} \rfloor = \frac{d_2 + m}{m} = \frac{d_1}{m} = dist_p(q,w_1)$.

Thus, for both cases, the deducted result $round(dist_s(p,v)/m)$ is always equal to $dist_p(q,w_1)$, and thus $w$ complies with the probing result from $q$. Similarly, the other direction is also true: if $w$ is a possible location in regard to the probing result from $q$, vertices in $G^{1/m}$ in the vicinity of $v$ are also possible locations in regard to the result from $p$. And by a simple induction on stride number, we show that the \textit{subdivisional equality} of the robber sets $R_c$ and $R_s$ is maintained.
\end{proof}

\subsubsection{Proof of Correctness}
We now show that the subdivision strategy $A_{subs}$ that we propose based on \textit{cop-winning} multiple-cop strategy $A_{cop}$ is \textit{cop-winning}.

\begin{theorem}
$A_s$ is \textit{cop-winning}.
\end{theorem}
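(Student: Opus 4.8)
The plan is to show that the subdivision strategy $A_s$ terminates — equivalently, by the first theorem, that its Cop Strategy Graph $H(A_s)$ is finite — by pushing the finiteness of $H(A_{cop})$ through the correspondence established in Lemma~\ref{lemma:deduce1}. The argument has three stages matching the three stages of the game, so I would first dispose of Stage~1 and Stage~3 as easy cases and then devote the bulk of the proof to Stage~2.

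\textbf{Stages 1 and 3.} In both of these stages the robber is confined to a single thread (in Stage~1 it has not yet reached a branch vertex; in Stage~3 it has left its last branch vertex forever), and the cop probes randomly. I would argue that within a single thread the robber set is an interval on that thread, and that repeatedly probing the two branch vertices bounding the thread — which are finitely many and are all eventually probed under ``probe randomly'' interpreted as an exhaustive schedule — pins the robber down to a single vertex. Hence each of these stages lasts only finitely many rounds. (I would note explicitly that ``probe randomly'' must be read as ``cycle through all branch vertices'', so that every thread's endpoints are probed infinitely often.)

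\textbf{Stage 2 — the main step.} Here I would use Lemma~\ref{lemma:deduce1}: at the start of each stride the robber set $R_s$ in $Game_{subs}$ is subdivisionally equal to the robber set $R_c$ in $Game_{cop}$, i.e. $R_s$ lies in the union of the vicinities of the branch vertices $b$ with $b^m\in R_c$. The key is to set up the correspondence between a \emph{stride} of $Game_{subs}$ and a \emph{round} of $Game_{cop}$: a stride ends precisely when the robber next reaches a branch vertex, at which point the \texttt{Deduce}d results are fed into $Game_{cop}$ as the round-$i$ probe outcomes and $\phi_c$ is advanced one round. Because $A_{cop}$ is cop-winning, $H(A_{cop})$ is finite, so there is a uniform bound $N=capt_c$ on the number of rounds; I would conclude that after at most $N$ strides $|R_c|=1$, whence by subdivisional equality $R_s$ is contained in the vicinity of a single branch vertex — a sub-path of $G^{1/m}$ of bounded size — and one more short round of probing that vertex's thread endpoints isolates the robber. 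Combining the three stages, every branch of $H(A_s)$ terminates, so $H(A_s)$ is finite and $A_s$ is cop-winning.

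\textbf{The main obstacle} I expect is justifying that each stride of $Game_{subs}$ really consists of only finitely many rounds, so that ``$\le N$ strides'' translates into ``finite time''. A stride runs from one branch-vertex visit to the next while the robber passes through midpoints; if the robber could wander arbitrarily long inside threads without hitting a branch vertex the stride would not terminate. I would address this by observing that a stride in which the robber never reaches a branch vertex is exactly a Stage~3-type situation (the robber stays in one thread forever), which is handled by the random-probing argument above and ends in finitely many rounds with a capture — so either the stride is finite, or the game has already ended. A secondary subtlety is the \texttt{StrategicProbe} recursion: when a probe lands while the robber is at a midpoint the whole batch of probes is reissued; I would point out that between two consecutive midpoint visits the robber's nearest branch vertex is unchanged (as noted in the preamble), and the robber can visit a midpoint only finitely often before moving to a branch vertex, so the recursion bottoms out and the stride's probes are all completed from a single vicinity, which is what Lemma~\ref{lemma:deduce1} requires.
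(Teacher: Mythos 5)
Your proposal is correct and follows essentially the same route as the paper: both rest on Lemma~\ref{lemma:deduce1}'s subdivisional correspondence between $R_s$ and $R_c$, the observation that non-strategic probes only shrink $R_s$, and the cop-winning property of $A_{cop}$ forcing $|R_c|=1$ within finitely many strides. Your treatment is in fact more careful than the paper's two-sentence argument — in particular your explicit handling of why each stride and each of Stages~1 and~3 terminates, why the \texttt{StrategicProbe} recursion bottoms out, and why $|R_c|=1$ only pins $R_s$ to a bounded vicinity needing one further probe — all points the paper glosses over.
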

\begin{proof}
As proved above, if we only consider the strategic probes in $Game_{subs}$, the robber sets in the two games would be \textit{subdivisionally} the same. The \textit{non-strategic} random probes in $Game_{subs}$ during Stage 1 or 2 only eliminate elements from the robber set $R_{s}$. Thus, $R_s \subseteq R_c$ anytime in the games.

Now, assume in some point in the games, $|R_{c}|=1$, then $|R_{s}| \leq 1$. Note that such must happen before the game enters Stage 3, and we will be able to locate the robber in the first two Stages. We have thus shown that $A_s$ is \textit{cop-winning}.
\end{proof}

\subsection{$\zeta=O(2^{capt/\eta}16^\eta \delta^{2\eta})$}\label{section:cop}
Now, we look at the relationship between the localization number and the subdivision number in the other direction. Set a graph $G$. Given a \textit{cop-winning} strategy $A_{subs}$ for $Game_{subs}$ on graph $G^{1/\eta}$ , we will construct a \textit{cop-winning} strategy $A_{cop}$ for $Game_{cop}$ with $k$ cops, for any $k \geq 2^{capt/\eta}16^\eta \delta^{2\eta}$.

We again establish a correspondence between the robber sets for the two games, and to do so, we first set some special limits on the robber's behavior in $Game_{subs}$. This is valid as we "mock" the robber in $Game_{subs}$ to solve the target game $Game_{subs}$. First, we limit the robber to only choose branch vertices as its initial position. Next, the robber takes \textit{strides} during the game and their definition is slightly different from the previous. In the current problem, each stride contains exactly $\eta$ rounds, and the robber lands on a branch vertex at the end of each stride. If it moves to another branch vertex in a stride from the last consecutive stride, it can only do so directly without backtracking; if it ends up at the same vertex, it is limited to remaining on the vertex for all $m$ rounds or moving to a midpoint and back. The \textit{round index} for a stride is defined as the index of the round within the $\eta$ rounds in a stride and takes the value of 1 to $\eta$.

We denote the branch vertex the robber visits at the end of each stride $i$ as $b_i$. Its initial position is denoted as $b_0$. Then we construct the cop strategy graph $H$ of the cop strategy $A_{subs}$ for this special setup of the subdivision game. It is rooted at the initial state with the extended robber set as all vertices.

\begin{figure}
    \centering
    \includegraphics[width=\textwidth]{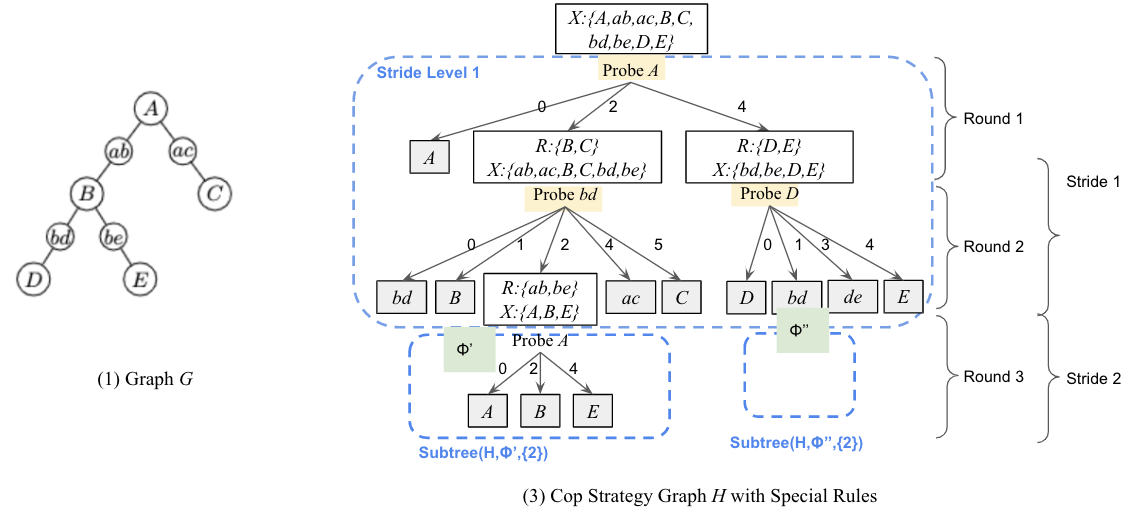}
    \includegraphics[width=\textwidth]{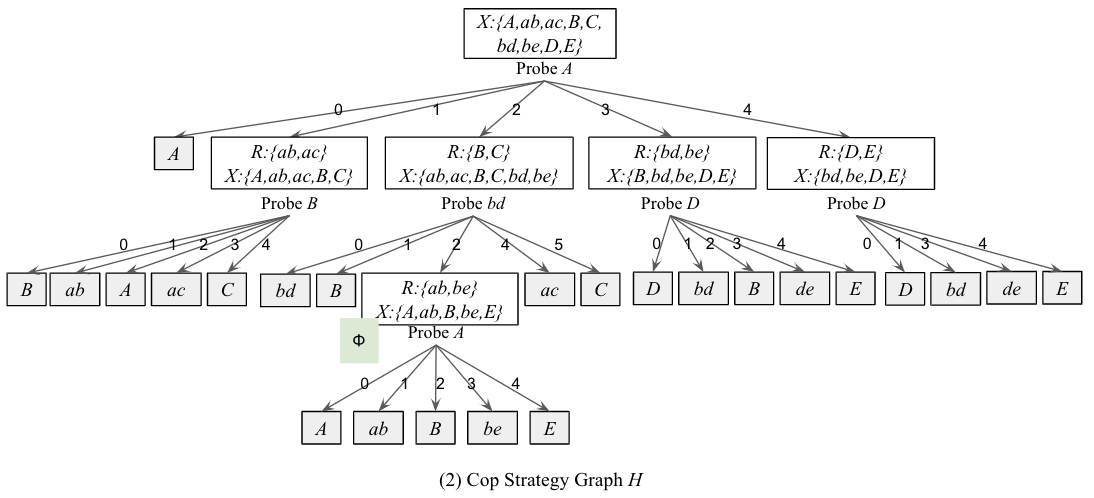}
    \caption{Cop Strategy Graph for Subdivision Games}
    \label{fig:subs}
\end{figure}

As an example, Figure~\ref{fig:subs} presents a graph $G$ (1) and a complete cop strategy graph $H$ (2). With our special setup, $H$ is reduced (3). In specific, we remove 1 the non-branch vertices in the robber sets in round 1, as well as $ab, be$ from the extended robber set in round 2 when transforming $\phi$ to $\phi'$, as they don't comply with the robber's allowed movements.

For $H$ with the rules, we define \textit{stride level}. Stride level $i$ of $H$ as all possible probes during stride $i$. And we denote the sub-tree rooted at a state $\phi$ as $SubTree(H, \phi)$. $SubTree(H, \phi, \{l_1, l_2\})$ represents the set of vertices on stride level $l_1, l_2$ of $H$ that are contained in $SubTree(H, \phi)$. $SubTree(H, \phi, l) \subset SubTree(H, \phi', l)$ where $\phi'$ is an ancestor state of $\phi$. See Graph (3) in Figure~\ref{fig:subs} for illustrations. We also introduce the term \textit{corresponding endpoints} of a vertex $p$ in $G^{1/\eta}$ as the vertices in $G$ that correspond to the two endpoints of the thread that $p$ is on, denoted $CorrEnd(p)$. If $p$ is itself a branch vertex, the corresponding endpoint is $p^m$ itself.

\subsubsection{Cop Strategy}
We now present the cop-winning strategy $A_{cop}$ in $Game_{cop}$ on $G$. Its main difference from the one in Section~\ref{section:subs} is that it creates an \textit{one-to-multiple} relationship of states in the two games.

During $Game_{subs}$, we maintain a set of states $\Phi_{s}$, which contains all possible states in the subdivision game. It is initialized with a single initial state whose extended robber set is all vertices in $G^{1/\eta}$. At the start of the game, we probe in $G$ the corresponding endpoints of $SubTree(H, \phi_{s}, \{1\})$. Starting from the next round (counted as round 1 for convenience), $A_{cop}$ acts according to Algorithm~\ref{alg:strategy} until the cops win. The probes starting for round 1 are regarded as the \textit{strategic probes}. And to emphasize the one-to-one correspondence of rounds and strides in the two games, we also call \textit{rounds} in $Game_{cop}$ as \textit{strides}.

\begin{algorithm}
\caption{Cop Strategy in Multiple-cop Game}\label{alg:strategy}
\KwIn{Cop strategy graph(of subdivision game) $H$, graph $G$}
\KwOut{\textit{True} if cop wins}

Initialize $\phi_s=State\{ExtendedRSet: V(G^{1/\eta})\}$, $\Phi_s=\{\phi_s\}$, $i=1$\;

\While{True}{
    Probe all corresponding endpoints of $\bigcup_{\phi_s \in \Phi_s} SubTree(H,\phi_s,\{i,i+1\})$, and if there exist terminating states, also probe those of the terminating robber's locations; store the results in $D_{all}$\;
    \KwRet True if the robber is located\;
    
    \While{$\Phi_s$ not empty}{
        $\phi$ = $\Phi_s$.pop()\;
        Get from $D_{all}$ $D=SubTree(H,\phi,\{i\})$ with the last and current $\phi$\;
        Deduce probing results and get updated states for $Game_{subs}$ via $DeduceAndUpdate$ function; append updated states to $\Phi_s$\;
    }
    i=i+1\;
}
\end{algorithm}

If $\bigcup_{\phi_s \in \Phi_s} SubTree(H,\phi_s,\{i,i+1\})$ for some stride $i$ contains terminating states in $Game_{subs}$, our probes in $Game_{cop}$ of the corresponding endpoints of terminating robber's locations will determine if they are terminating states in $Game_{cops}$ as well. if so the cop wins, and if not, the state is eliminated from focus. Thus, for the following proofs, we assume that the subtrees don't have terminating states.

\subsubsection{Deduction of Probing Results}
We now take a closer look at the $DeduceAndUpdate$ process. We will show that it maintains the correspondence between the robber set after updating with respective probing results.

See Algorithm~\ref{alg:deduce2} for $DeduceAndUpdate()$. For each stride, it takes in three variables: a graph $A$, the current state in $Game_{subs}$, and a set of probing results $D$, which is a 4-set of probing results in $G$ from each corresponding endpoint of the probe in the previous and current rounds. Different from the previous \textit{Deduce} function defined in Algorithm~\ref{alg:strategy1}, it then outputs \textit{one or two} \textit{sets} of updated states, instead of one.

 \begin{algorithm}
\SetKwFunction{FDeduceAndUpdate}{DeduceAndUpdate}
\SetKwFunction{FDeduceDiffStride}{DeduceDiffStride}
\SetKwFunction{FDeduceSamestride}{DeduceSamestride}
\SetKwFunction{FDeduceRoundMoves}{DeduceRoundMoves}
\SetKwFunction{FDeduceRoundStays}{DeduceRoundStays}

\SetKwProg{Pn}{Function}{:}{\KwRet}
\caption{Probe Result Deduction: Multiple-cops to Subdivision}\label{alg:deduce2}

\Pn{\FDeduceAndUpdate{$D$, A, $\phi$}}{
    UA, UB, VA, VB = D\;
    Set $diff$ = $(UA \neq UB || UA \neq VA)$ \;
    \uIf{diff}{\KwRet \{DeduceDiffStride()\}} 
    \Else{\KwRet \{DeduceSamestride\}}
}
\;
\Pn{\FDeduceDiffStride{$D$, A, $\phi$ }}{
\For{$\phi \in \Phi$}{
    \For{$j \in [k]$}{
        Result = DeduceRoundMoves(D, j,A ($\phi$)) \tcp*{for each round, deduce the result according to the probe vertex $A(\phi)$}
        Input Result to $Game_{subs}$ for round $j$, and update $\phi$ \;
        
    }
    Update $\phi$ in $\Phi$ if not exists\;
}
\KwRet $\Phi$
}
\;
\Pn{\FDeduceSamestride{$D$, A, $\phi$ }}{
\For{$\phi \in \Phi$}{
    \tcc{robber moves}
    $\phi_1=\phi$ \;
    \For{$j \in [k]$}{
        Result = DeduceRoundMoves(D, j,A($\phi_1$))\;
        Input $Result$ to $Game_{subs}$ for round $j$, and update $\phi_1$\;
    }
    \;
    \tcc{robber stays}
    $\phi_2=\phi$ \;
    \For{$j \in [k]$}{
        Result = DeduceRoundStays(D, j,A($\phi_2$))\;
        Input $Result$ to $Game_{subs}$ for round $j$, and update $\phi_2$\;
    }
    Replace $\phi$ with $\phi_1, \phi_2$ in $\Phi$ if not exist\;
}
\KwRet $\Phi$\;
}
\;
\Pn{\FDeduceRoundMoves{$D$, $j$, A, $p$}}{
    Get (uA, uB, vA, vB) the 4 results from probe $p$ from all results $D$\;
    \KwRet $\min(j+m*dist(u,A) +dist(A',p), (m-j)+m*dist(v,A)+dist(A',p), j+m*dist(u,B) +dist(B',p), (m-j)+m*dist(u,B) +dist(B',p))$\;
}
\;
\Pn{\FDeduceRoundStays{$D$, $j$, A, $p$}}{
    Get (uA, uB, vA, vB) the 4 results from probe $p$ from all results $D$\;
    \KwRet $\min(m*dist(u,A) +dist(A,p), m*dist(u,B) +dist(B,p))$\;
}
\end{algorithm}

\begin{lemma}
\textit{DeduceAndUpdate()} maintains the \textit{subdivisional} equality between the robber sets in the two games during each round/stride.
\end{lemma}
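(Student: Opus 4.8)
The plan is to mirror the structure of the proof of Lemma~\ref{lemma:deduce1}, but now accounting for the one-to-multiple branching in \textit{DeduceAndUpdate()}. First I would fix a single state $\phi \in \Phi_s$ and a single strategic probe $p = A_{subs}(\phi)$ used by the subdivision strategy at some round $j$ within stride $i$. By the algorithm, the cops in $Game_{cop}$ probe the corresponding endpoints of $p$ at both strides $i$ and $i+1$, i.e. the four vertices yielding the results $uA, uB, vA, vB$ that are bundled into $D$; here $u, v$ are the two endpoints in $G$ of the thread containing $p$, $A = b_{i-1}^\eta$ is the branch vertex at the start of the stride, and $B = b_i^\eta$ is the branch vertex at its end (so $A = B$ in the ``robber stays'' branch). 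The core claim is: a branch-vertex trajectory $b_0, b_1, \dots$ in $G^{1/\eta}$ obeying the special movement rules is consistent with the $G^{1/\eta}$-probe result at $p$ if and only if the corresponding vertex trajectory in $G$ is consistent with the four $G$-probe results, under the deduction formula \texttt{DeduceRoundMoves}/\texttt{DeduceRoundStays}.

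The key steps, in order, are as follows. (1) Separate the two cases exactly as the algorithm does: \emph{diff} (the robber crosses to a new branch vertex $B \neq A$ during the stride, moving monotonically along a single thread) versus \emph{same-stride} (the robber's nearest branch vertex is unchanged, so it either sits at $A$ or bounces to a midpoint and back). (2) In the \emph{diff} case, parametrise the robber's position within the stride: at round index $j$ it is at distance $j$ from $A$ along the thread toward $B$, equivalently distance $\eta - j$ from $B$; then any shortest path from $p$ to the robber routes either through $A$ or through $B$, and through $u$ or through $v$ on $p$'s own thread, giving the four-way minimum $\min\{\,j + \eta\cdot dist_G(u,A^\eta) + dist_{G^{1/\eta}}(u^{1/\eta},p),\ \dots\,\}$. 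Here I would reuse the triangle-inequality / divisibility argument from Lemma~\ref{lemma:deduce1}: since $dist_G(\cdot,\cdot)$ values scale by exactly $\eta$ and the within-thread offset is a bounded residue, the $G^{1/\eta}$-distance from $p$ to the robber and the minimum of the four deduced quantities agree, so consistency transfers in both directions. (3) In the \emph{same-stride} case do the analogous computation with $A = B$: the ``stays'' sub-branch forces the robber's contribution to be $\eta\cdot dist_G(u,A^\eta) + dist(u^{1/\eta},p)$ independent of the round index, while the midpoint-bounce possibility is already subsumed because probing a branch vertex only reveals the residue class and cannot distinguish ``at $A$'' from ``bounced toward a midpoint and back'' — which is precisely why \texttt{DeduceSamestride} must emit \emph{both} $\phi_1$ and $\phi_2$. (4) Finally, a short induction on the stride number $i$: assuming $R_s^{(i-1)}$ and $R_c^{(i-1)}$ are subdivisionally equal, each strategic probe within stride $i$ cuts $R_s$ exactly as its deduced counterpart cuts $R_c$ (by steps 2--3 applied round-by-round), and the extended-robber-set update $X = CloseNeighbor(R)$ commutes with subdivisional equivalence under the movement rules; the branching of $\Phi_s$ into (at most two) states exactly tracks the two geometrically indistinguishable robber behaviors, so the \emph{union} of the $G^{1/\eta}$-robber sets over $\Phi_s$ is subdivisionally equal to the $G$-robber set.

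The main obstacle I expect is step (3) together with the bookkeeping of \emph{why two states, and only two, suffice}: one must argue carefully that probing only branch vertices in $G^{1/\eta}$ genuinely cannot separate the ``stays at $A$'' robber from the ``$A \to$ midpoint $\to A$'' robber (so both must be retained), while simultaneously these are the \emph{only} two same-stride behaviors consistent with the special movement rules — no third possibility (e.g. a partial excursion stopping short of the midpoint and returning) can survive, because such a trajectory would not land on a branch vertex at the end of the stride. A secondary subtlety is making the four-way minimum in \texttt{DeduceRoundMoves} provably tight: I need that for the \emph{true} robber location the minimising branch is realised, which follows because a genuine shortest $p$-to-robber path in $G^{1/\eta}$ must enter the robber's thread through one of $u^{1/\eta}, v^{1/\eta}$ and must traverse the stride-thread from the $A$ side or the $B$ side, and each such path projects to one of the four summands; conversely each summand over-estimates or meets the true distance, so the minimum is exact. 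Once these two points are pinned down, the induction in step (4) is routine and closes the proof.
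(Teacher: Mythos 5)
Your overall architecture matches the paper's: split on whether the two consecutive rounds of probe results agree, compute the probing result in $G^{1/\eta}$ as a four-way minimum over paths through the two endpoints of the probe's thread and the two endpoints of the robber's thread in the moving case, and close with an induction over strides on the union of robber sets over $\Phi_s$. (Your letters are swapped relative to the paper --- there $u,v$ denote the robber's old and new positions in $G$ and $a,b$ the corresponding endpoints of $p$'s thread --- but that is cosmetic.)

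However, step (3) misidentifies the dichotomy behind the two emitted states, and this is a genuine gap. In the paper, when all probe results coincide across the two rounds, the ambiguity being tracked lives at the level of $G$: the real robber either stayed at $u$, or moved to an adjacent vertex $v\neq u$ that happens to lie at the same distance from every probed vertex. Accordingly, \textit{DeduceSamestride} emits one state mocking a stationary robber (via \textit{DeduceRoundStays}) and one mocking a robber traversing the thread from $u^{1/\eta}$ to $v^{1/\eta}$ (via \textit{DeduceRoundMoves}). Your two branches --- ``sits at the branch vertex'' versus ``bounces to a midpoint and back'' --- are both sub-cases of the robber \emph{staying} at the same vertex of $G$, so the equidistant-neighbor scenario is never tracked; under your version the mocked robber set in $Game_{subs}$ would fail to contain the vicinity of the real robber's new position whenever it moves to such a neighbor, destroying exactly the subdivisional equality the lemma asserts. (By contrast, the ``partial excursion stopping short of the midpoint'' you worry about is a non-issue: that behavior is excluded by fiat in the special movement rules imposed on the mocked robber, and the midpoint bounce is folded into the ``stays'' branch rather than being the reason for branching.) The rest of your argument --- the tightness of the four-way minimum and the stride induction --- is consistent with the paper's proof once this case analysis is corrected.
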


\begin{proof}
At each stride $i$ in $Game_{subs}$, given a probe $p$ in $G^{1/\eta}$ in $SubTree(H, \phi, i)$, let $p$ be on a thread with endpoints $a',b'$, and let $a, b$ be their correspondence in $G$. By our cop strategy, both $a$ and $b$ are probed in rounds $i-1$ and $i$, i.e., before and after the robber's turn in round $i-1$. Say that the robber moves from $u$ to $v$ in round $i-1$, then in $D$ we can obtain $dist(u,a), dist(u,b), dist(v,a), dist(v,b)$, for all $p$.
\begin{figure}
\centering
\tikzstyle{vertex}=[circle,draw,minimum size=15pt,inner sep=0pt]
\tikzstyle{sub vertex} = [vertex,minimum size=10pt,fill=black!10]
\tikzstyle{edge} = [draw,thick]
\tikzstyle{right edge} = [draw,thick,bend right]
\tikzstyle{arrow edge} = [draw,thick,arrows]
\tikzstyle{thread edge} = [draw,thick, dashed]
\tikzstyle{weight} = [font=\small]
\begin{tikzpicture}[scale=1.5, auto,swap]
    \node[vertex] (u') at (0,0) {$u'$};
    \node[vertex] (v') at (1.2,0) {$v'$};
    \node[vertex] (a') at (0.2,1.3) {$a'$};
    \node[vertex] (b') at (1.4,1.3) {$b'$};
    \node[sub vertex] (r) at (0.6,0) {$r$};
    \node[sub vertex] (p) at (0.8,1.3) {$p$};
    \path(u') edge [bend left] node {} (a');
    \path(u') edge [bend left] node {} (b');
    \path(v') edge [bend right] node {} (a');
    \path(v') edge [bend right] node {} (b');
    \path[thread edge] (a') -- node[] {} (p);
    \path[thread edge] (p) -- node[] {} (b');
    \path[thread edge] (u') -- node[weight] {$j$} (r);
    \path[thread edge] (r) -- node[weight] {$m-j$} (v');
    \node[text width=3cm] at (1.4, -0.5) {(1) $u \neq v$};
\end{tikzpicture}
\begin{tikzpicture}[scale=1.5, auto,swap]
    \node[vertex] (u') at (0.6,0) {$u'$};
    \node[vertex] (a') at (0.2,1.3) {$a'$};
    \node[vertex] (b') at (1.4,1.3) {$b'$};
    \node[sub vertex] (p) at (0.8,1.3) {$p$};
    \path(u') edge [bend left] node {} (a');
    \path(u') edge [bend right] node {} (b');
    \path[thread edge] (a') -- node[] {} (p);
    \path[thread edge] (p) -- node[] {} (b');
    \node[text width=3cm] at (1.4, -0.5) {(2) $u = v$};
\end{tikzpicture}
\caption{$G^{1/\eta}$ when $u,v$ in $G$ are different or the same}
\label{figure:cop}
\end{figure}
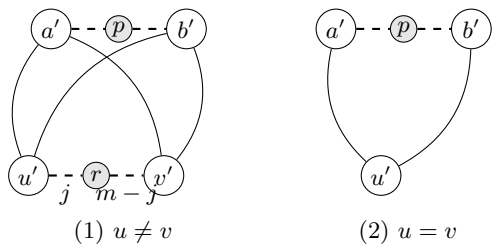

\textbf{Case 1}: If at least of one the probing results are different in the two rounds, i.e., there exists $p$ such that $dist(u,a)\neq dist(v,b)$ or $dist(u,B)\neq dist(v,B)$, then $v, u$ must be different vertices. See the illustration as Figure~\ref{figure:cop} (1). In $G^{1/\eta}$, given a round count $j$, we want to "mock" the robber at the $j$-th vertex from $u'=u^{1/m}$ to $v'=v^{1/m}$. Denote the robber's location as $r$. Then the probing result of $p$ is the minimum length of paths from $r$ and $p$, and we simply return the minimum length of the 4 paths. This is described in the function $DeduceRoundMoves()$.

\textbf{Case 2}: If the probing results in the two rounds are exactly the same, i.e, $\forall p, dist(u,A)=dist(v,A)$ and $dist(u,A)=dist(v,A)$, then $v, u$ are either the same or adjacent with the same distance to the probes. In this case, with one set of probing results in $Game_{cop}$, two sets in $Game_{subs}$ are considered. The first results "mock" the robber remaining at the same vertex, ss Figure~\ref{figure:cop} (2) shows. It is described in the function $DeduceRoundStays()$. The second results, on the other hand, "mock" the robber moving from $u$ towards $v$ with the same probing result. Figure~\ref{figure:cop} (1) presents an illustration, and it is described in the function $DeduceRoundMoves()$.

In both cases, the updated robber sets in the two games contain \textit{exactly} the close neighbors of their previous sets that comply with the probing results. And thus, the correspondence is maintained.
\end{proof}

\subsubsection{Proof of Correctness}
Now, we show that the proposed cop strategy is efficient to win the multiple-cop game. First, we assume that we have an infinite number of cops in $Game_{cop}$, i.e., \textit{all} probing attempts are fulfilled each round, and based on this, we prove the correctness of the strategy. Next, we will calculate the actual cop number required to complete the probes, and provide the bound on localization number $\zeta$ in regard to the subdivision number $\eta$ formally.

\begin{theorem}
Given $A_{subs}$ is cop-winning, $A_{cop}$ is cop-winning.
\end{theorem}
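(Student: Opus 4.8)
The plan is to argue that $A_{cop}$ inherits the winning property of $A_{subs}$ through the correspondence established in the previous lemma, together with the observation that the one-to-many state relationship only ever tracks a finite family of subdivision-game states. First I would set up the invariant carefully: at the start of each stride $i$, the set $\Phi_s$ maintained by Algorithm~\ref{alg:strategy} consists exactly of those states $\phi_s$ reachable in the special-rule subdivision game from a robber route whose branch-vertex landing sequence $b_0, b_1, \dots, b_{i-1}$ is consistent with the robber route actually being mocked in $Game_{cop}$, and that for each such $\phi_s$ the subdivision robber set $R_s(\phi_s)$ is \emph{subdivisionally equal} to a corresponding piece of the multiple-cop robber set $R_c$. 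The base case is immediate since both extended robber sets start as all of $V(G^{1/\eta})$ and $V(G)$ respectively; the inductive step is precisely the content of the \textit{DeduceAndUpdate} lemma just proved, applied simultaneously to every $\phi_s \in \Phi_s$ (with the Case~2 split producing the ``one or two sets'' of updated states).

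Next I would translate ``$A_{subs}$ is cop-winning'' into a statement about $H$: by the earlier theorem relating cop-winning strategies to finite cop strategy graphs, $H$ is finite, so there is a uniform bound — essentially $capt_s/\eta$ strides, since each stride is $\eta$ rounds — after which every branch of $H$ (under the special rules) reaches a terminating state with $|R_s| = 1$. I would then run the induction forward to that depth: after at most $capt_s/\eta$ strides, every $\phi_s$ remaining in $\Phi_s$ is a terminating state, so $|R_s(\phi_s)| = 1$ for each. By subdivisional equality, the corresponding piece of $R_c$ has size $1$ as well (the branch vertex $v = b^{\eta}$), and since $\Phi_s$ covers all robber routes, $R_c$ itself collapses to a single vertex. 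The subtlety that terminating states may appear mid-stride is already handled in the excerpt by probing the corresponding endpoints of terminating robbers' locations and either declaring victory or discarding that state, so I would simply invoke that remark to reduce to the no-terminating-subtree case. One must also check that restricting the subdivision robber to the ``special'' movements (branch-vertex starts, $\eta$-round strides, no backtracking within a stride) does not help the subdivision robber evade — it only shrinks the robber's options, so $A_{subs}$ restricted to this setting is still cop-winning, which is why building $H$ on the reduced game is legitimate.

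The main obstacle I anticipate is making the ``covering'' claim airtight: I need that \emph{every} play of $Game_{cop}$ against $A_{cop}$ — i.e. every genuine robber route in $G$ — is mocked by at least one maintained state $\phi_s \in \Phi_s$, so that when all maintained states terminate, the real robber is actually pinned down. This requires checking that the Case~1/Case~2 dichotomy in \textit{DeduceAndUpdate} is exhaustive (every pair of consecutive probing-result vectors either differs somewhere, forcing $u \neq v$ and a ``moves'' deduction, or agrees everywhere, in which case both the ``stays'' and ``moves'' branches are kept), and that the $\eta$-round granularity lines up: a single multiple-cop round corresponds to a full $\eta$-round stride in $G^{1/\eta}$, and the probes scheduled at corresponding endpoints of $SubTree(H, \phi_s, \{i, i+1\})$ genuinely supply all four distances $dist(u,a), dist(u,b), dist(v,a), dist(v,b)$ needed by every \textit{DeduceRoundMoves}/\textit{DeduceRoundStays} call in that stride. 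Once this bookkeeping is in place, finiteness of $H$ gives termination in at most $\lceil capt_s/\eta\rceil$ strides, and the proof closes; the quantitative cop count $k \geq 2^{capt/\eta} 16^{\eta} \delta^{2\eta}$ is deferred to the subsequent counting argument and is not needed here, since for this theorem we have assumed infinitely many cops.
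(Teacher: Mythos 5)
Your proposal is correct and follows essentially the same route as the paper: use the \textit{DeduceAndUpdate} correspondence to relate $R_c$ to the robber sets of the maintained subdivision states, and use the cop-winningness of $A_{subs}$ (equivalently, finiteness of $H$) to force termination within $O(capt/\eta)$ strides. You are in fact more careful than the paper's own two-line argument, which speaks of a single $R_s$ and elides both the bookkeeping over the whole family $\Phi_s$ and the covering claim (that every genuine robber route in $G$ is mocked by some maintained state) that you rightly identify as the main subtlety.
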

\begin{proof}
The strategic probes maintain the \textit{subdivisional} equality between the robber sets in the two games. And in $Game_cop$, the additional probes in round "0" at the start may further reduce its robber set, and probing the corresponding endpoints of terminating states in $Game_{subs}$ may locate the robber. Thus, $R_c \subseteq R_s$ anytime in the games.

Assume in some stride $i$ in the games, $|R_{subs}|=1$, then $|R_{c}| \leq 1$. Thus, $A_{cop}$ is \textit{cop-winning}.
\end{proof}

Now, we calculate the localization number $\zeta$ required to send all probe attempts each round.

\begin{theorem}
Let $G$ be a graph with subdivision number $\eta$, and let $capt$ be the number of rounds needed for the cops to win the game on graph $G^{1/\eta}$. Let $\delta$ be the diameter of $G$. We show that $\zeta= O(2^{capt/\eta}16^\eta \delta^{2\eta})$.
\end{theorem}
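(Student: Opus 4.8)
The plan is to bound the total number of probe attempts that the cop makes in any single round of Algorithm~\ref{alg:strategy}, since that count is exactly an upper bound on $\zeta$. By the structure of the algorithm, in stride $i$ the cops probe the corresponding endpoints of $\bigcup_{\phi_s \in \Phi_s} SubTree(H, \phi_s, \{i, i+1\})$, plus the corresponding endpoints of any terminating robber locations. So the required $\zeta$ is at most $2 \cdot (\text{number of distinct probe vertices appearing on two consecutive stride levels of } H, \text{ summed over all } \phi_s \in \Phi_s)$, where the factor $2$ comes from each probe in $G^{1/\eta}$ having at most two corresponding endpoints in $G$. I would therefore decompose the bound into three multiplicative factors: (a) how many probe vertices sit on a single stride level of one subtree of $H$; (b) how many states $\phi_s$ are simultaneously alive in $\Phi_s$; and (c) the factor $2$ for corresponding endpoints, which is absorbed into the constants.

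For factor (a): within one stride of the subdivision game on $G^{1/\eta}$, the cop (using $A_{subs}$) makes $\eta$ probes per round, one for each round index $1,\dots,\eta$; but because $H$ branches on probing results, a single stride level of $H$ — the set of \emph{all} possible probes during that stride — can contain many vertices. The branching at each round of a stride is bounded by the number of possible distance values a probe in $G^{1/\eta}$ can return, which is at most the diameter of $G^{1/\eta}$, i.e.\ $\Theta(\eta \delta)$. Over $\eta$ rounds of a stride this gives a branching of at most $(\eta\delta)^{O(\eta)}$; after squinting at the bases this is absorbed into $16^\eta \delta^{2\eta}$ (the $16^\eta$ covers $\eta^{O(\eta)}$-type factors and the constant bases, the $\delta^{2\eta}$ covers the diameter contributions, and the extra factor of $2$ for corresponding endpoints). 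The key point I would make carefully is that each stride level of $SubTree(H,\phi_s,\{i\})$ has size at most $C^\eta \delta^{2\eta}$ for an absolute constant $C$, which then combines across the two consecutive levels $\{i,i+1\}$ with only a constant-factor loss.

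For factor (b): the number of live states $|\Phi_s|$ can only grow when \textsf{DeduceAndUpdate} returns two states instead of one, which happens exactly once per stride in the worst case (the "robber moves / robber stays" split in \textsf{DeduceSamestride}). So after $t$ strides, $|\Phi_s| \le 2^t$. Since the whole process terminates within $capt$ rounds of the subdivision game, and each stride is $\eta$ rounds, there are at most $capt/\eta$ strides, giving $|\Phi_s| \le 2^{capt/\eta}$. Multiplying (a) and (b) and folding the corresponding-endpoint factor and all absolute constants into the bases yields $\zeta = O\!\left(2^{capt/\eta} 16^\eta \delta^{2\eta}\right)$, as claimed.

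I expect the main obstacle to be pinning down factor (a) rigorously: one must argue that the cop strategy graph $H$ for the \emph{restricted} subdivision game (with the stride rules imposed) genuinely has stride-level width bounded by $C^\eta\delta^{2\eta}$, rather than something larger. This requires checking that the only sources of branching within a stride are (i) the at-most-$O(\eta\delta)$ possible probe results at each of the $\eta$ round-indices and (ii) the movement alternatives of the robber, and that the stride rules (direct move to an adjacent branch vertex, or stay/bounce) really do cap the structural branching; the residual bookkeeping — verifying that $\eta^{O(\eta)}$ and the leftover constants fit under $16^\eta$, and that two consecutive stride levels together stay within a constant factor — is routine once that width bound is in hand.
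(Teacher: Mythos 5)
Your overall decomposition matches the paper's: you bound $\zeta$ by (number of live states in $\Phi_s$) $\times$ (size of a two-stride-level slice of a subtree of $H$) $\times$ (a constant for corresponding endpoints), and your factor (b), namely $|\Phi_s|\le 2^{capt/\eta}$ via the once-per-stride ``moves/stays'' doubling in \textit{DeduceSamestride}, is exactly the paper's argument. The problem is factor (a), which you correctly flag as the crux but then resolve incorrectly. You bound the branching at each node of $H$ by the number of possible probe results in $G^{1/\eta}$, i.e.\ by the diameter of $G^{1/\eta}$, which is $\Theta(\eta\delta)$, and you claim the resulting $(\eta\delta)^{O(\eta)}$ is ``absorbed into $16^\eta\delta^{2\eta}$'' because ``$16^\eta$ covers $\eta^{O(\eta)}$-type factors.'' That absorption is false: $\eta^{c\eta}=e^{c\eta\ln\eta}$ is superexponential in $\eta$ and is not $O(16^\eta)$ for any constant base, so your argument only yields $\zeta=O\bigl(2^{capt/\eta}(\eta\delta)^{O(\eta)}\bigr)$, which is strictly weaker than the stated bound.

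The missing idea is the paper's refinement of the branching factor: at a fixed node of $H$ the round index within the stride and the position of the probe vertex inside its thread are both determined, so the residue of the probing distance modulo $\eta$ is already fixed. Each possible distance in $G^{1/\eta}$ then corresponds to one of only $4$ path configurations through the two thread endpoints on each side, each parametrized by a distance in $G$, giving at most $O(4\delta)$ distinct possible probing results per node rather than $O(\eta\delta)$. With branching $O(4\delta)$ over the $2\eta$ tree levels spanned by $SubTree(H,\phi_s,\{i,i+1\})$, the slice has $O((4\delta)^{2\eta})=O(16^\eta\delta^{2\eta})$ vertices, which is where the constants $16^\eta$ and the exponent $2\eta$ actually come from. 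Without this step the $16^\eta$ in the theorem cannot be recovered, so the gap is genuine rather than routine bookkeeping.
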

\begin{proof}
Consider the maximum number of states in $Game_{subs}$  maintained in each stride, i.e., the maximum size of $\Phi_s$. $\Phi_s$ remains the same size when probing results in two consecutive rounds are different in $Game_{cop}$, but doubles when the probing results are the same.  With $capt$ rounds, there are $O(capt/\eta)$ strides, and thus $|\Phi_s|=O(2^{capt/\eta})$.

Then consider the number of probes for each round in $Game_{cop}$. We probe in $G$ the corresponding endpoints of  $\bigcup_{\phi_s \in \Phi_s} SubTree(H,\phi_s,\{i,i+1\})$ each round. A stride level in $H$ contains exactly $\eta$ tree levels, and each tree node has $O(\delta_{G^{1/\eta}})$ degree as there are at most $O(\delta_{G^{1/\eta}})$ possible probing results. Since at each tree node, the round index and probe positions are settled, i.e., the distance between the robber/probes and their nearest branch vertex, each path length in $G$ corresponds to 4 distances, and thus $O(\delta_{G^{1/\eta}})=O(4\delta)$.  Each $SubTree(H,\phi_s,\{i,i+1\})$ has $O(2\eta)$ tree levels, so it has in total $O((4\delta_G)^{2\eta})=O(16^\eta \delta^{2\eta})$ vertices.

Therefore, the minimum number of probes required for each round is $\zeta= O(2^{capt/\eta}16^\eta \delta^{2\eta})$.
\end{proof}

\end{document}